\newtheorem{definition}{Definition}
\newtheorem{Proposition}{Proposition}
\newtheorem{lemma}{Lemma}
\newtheorem{corollary}{Corollary}
\renewcommand{\vec}[1]{\mathbf{#1}}
\def\blfootnote{\xdef\@thefnmark{}\@footnotetext}
\begin{document}
\title{\LARGE Fluid Antenna-Assisted Dirty Multiple Access Channels\\over Composite Fading\thanks{The work of K. Wong and K. Tong is supported by the Engineering and Physical Sciences Research Council (EPSRC) under Grant EP/W026813/1. For the purpose of open access, the authors will apply a Creative Commons Attribution (CC BY) licence to any Author Accepted Manuscript version arising. The work of L\'opez-Mart\'inez was funded in part by Consejer\'ia de Transformaci\'on Econ\'omica, Industria, Conocimiento y Universidades of Junta de Andaluc\'ia, and in part by MCIN/AEI/10.13039/501100011033 through grants EMERGIA20 00297 and PID2020-118139RB-I00. The work of C. B. Chae is supported by the Institute of Information and Communication Technology Promotion (IITP) grant funded by the Ministry of Science and ICT (MSIT), Korea (No. 2021-0-02208, No. 2021-0-00486).}
\thanks{F. R. Ghadi and F. J. L\'opez-Mart\'inez are with the Communications and Signal Processing Lab, Telecommunication Research Institute (TELMA), Universidad de M\'alaga, M\'alaga, 29010, Spain. F. J. L\'opez-Mart\'inez is also with the Department of Signal Theory, Networking and Communications, University of Granada, 18071, Granada, Spain (e-mail: $\rm farshad@ic.uma.es$, $\rm fjlm@ugr.es$).}
\thanks{K. Wong and K. Tong are with the Department of Electronic and Electrical Engineering, University College London, London WC1E 6BT, United Kingdom. K. Wong is also with Yonsei Frontier Lab, Yonsei University, Seoul, 03722, Korea (e-mail: $\{\rm kai\text{-}kit.wong,k.tong\}@ucl.ac.uk$).}
\thanks{C. B. Chae is with School of Integrated Technology, Yonsei University, Seoul, 03722, Korea (e-mail: $\rm cbchae@yonsei.ac.kr$).}
\thanks{Y. Zhang is with Kuang-Chi Science Limited, Hong Kong SAR, China (e-mail: $\rm yangyang.zhang@kuang\text{-}chi.com$).}
\thanks{Corresponding author: Kai-Kit Wong.}
}
  
\author{Farshad~Rostami~Ghadi\IEEEmembership{}, 
            Kai-Kit~Wong, \IEEEmembership{Fellow}, \textit{IEEE},\\
  	    F. Javier~L\'opez-Mart\'inez, \IEEEmembership{Senior Member}, \textit{IEEE}, 
	    Chan-Byoung Chae, \IEEEmembership{Fellow}, \textit{IEEE},\\
	    Kin-Fai Tong, \IEEEmembership{Fellow}, \textit{IEEE}, and Yangyang Zhang

\vspace{-7mm}
}

\maketitle

\begin{abstract}
This letter investigates the application of the emerging fluid antenna (FA) technology in multiuser communication systems when side information (SI) is available at the transmitters. In particular, we consider a $K$-user dirty multiple access channel (DMAC) with non-causally known SI at the transmitters, where $K$ users send independent messages to a common receiver with a FA capable of changing its location depending on the channel condition. By connecting Jakes' model to copula theory through Spearman's $\rho$ rank correlation coefficient, we accurately describe the spatial correlation between the FA channels, and derive a closed-form expression for the outage probability (OP) under Fisher-Snedecor $\mathcal{F}$ fading. Numerical results illustrate how considering FA can improve the performance of multiuser communication systems in terms of the OP and also support a large number of users using only one FA at the common receiver in a few wavelengths of space.
\end{abstract}

\begin{IEEEkeywords}
Fluid antenna, multiple access channel, Fisher-Snedecor $\mathcal{F}$ fading, copula theory, outage probability.
\end{IEEEkeywords}%\vspace{-3.5ex}

\maketitle
\blfootnote{Digital Object Identifier 10.1109/XXX.2021.XXXXXXX}
%\IEEEpeerreviewmaketitle
\vspace{0mm}

\section{Introduction}\label{sec-intro}  
Multiple-input multiple-output (MIMO) systems, including multiuser MIMO and massive MIMO, have become one of the greatest advances in mobile communications over recent years. In this regard, deploying multiple antennas far apart at fixed locations at the transmitter and/or receiver sides enables MIMO to multiplex users entirely in the spatial domain and enhances the network capacity, e.g., \cite{larsson2014massive}. In the fifth-generation (5G) of wireless communications, massive MIMO can efficiently simplify the signal processing of multiuser signals by adopting a massive number of antennas (e.g., over $64$) at the base stations (BSs) \cite{ngo2013energy}. The sixth-generation (6G) will also see extra-large MIMO succeed by even utilizing continuous-aperture MIMO (CP-MIMO) \cite{huang2020holographic}. However, despite the outstanding advantages of different types of MIMO systems, there are practical issues such as the complexity of precoding optimization and channel state information (CSI) acquisition. In addition, ensuring the minimum distance between antenna spacing to be half of the wavelength can limit the integration of MIMO systems inside mobile devices due to physical space limitations \cite{stuber2001principles}.  
	
In order to address the aforesaid challenges in future mobile communication technology, fluid antenna (FA) systems have recently been introduced in \cite{wong2020fluid,wong2020fluid1} to obtain additional diversity and multiplexing benefits by exploiting novel dynamic radiating structures. In contrast to conventional antenna systems, a FA is a software-controllable fluidic (or pixel-based), conductive, or dielectric structure \cite{huang2021liquid} that has the ability to switch its location (i.e., ports) in a given small space. Therefore, these unique features can provide a stronger channel gain, lower interference, and other desirable performance for future wireless communications. In this regard, great efforts have recently been carried out to evaluate the FA system. Initial performance analysis for a single-user FA system was studied in \cite{wong2020fluid} and \cite{wong2020performance}, where the authors derived analytical expressions of the outage probability (OP) and ergodic capacity over correlated Rayleigh fading channels, respectively. Additionally, an analytical expression of the OP for a point-to-point (P2P) single FA system under correlated Nakagami-$m$ fading channels was obtained in \cite{tlebaldiyeva2022enhancing}.  Moreover, the closed-form expression of the spatial correlation parameters for a P2P FA system was obtained in \cite{wong2022closed} and a general eigenvalue-based model to accurately approximate the spatial correlation between FA ports given by Jakes' model was proposed in \cite{khammassi2023new}. In addition, only recently, a general copula-based formulation was proposed in \cite{ghadi2023copula} for the joint multivariate distribution of arbitrary correlated fading channels in the FA system which can describe the spatial correlation between FA ports beyond linear dependence structures. Besides, apart from the above-mentioned single P2P FA systems, the authors in \cite{wong2021fluid} evaluated the performance of fluid antenna multiple access (FAMA) in terms of the OP and average capacity.  

Motivated by the great potential of FA and the importance of analyzing the FA system in realistic radio propagation environments, this letter aims to investigate the performance of FA-aided multiuser communication systems under generalized fading conditions. Specifically, we consider a $K$-user dirty multiple access channel (DMAC) in which each user takes advantage of non-causally available side information (SI) and wants to send an independent message to a common BS receiver equipped with an FA system through correlated Fisher-Snedecor $\mathcal{F}$ fading channels. In this model, SI is an information-theoretic concept that mainly refers to either CSI or interference awareness, and has the ability to meet the reliability constraint in future wireless networks by reducing the destructive effects of the interference and providing reliable communication at higher rates \cite{philosof2011lattice,ghadi2023ris,ghadi2021role}. In particular, by connecting Jakes' model to copula theory with the help of rank correlation coefficients such as Spearman's $\rho$, we first provide an analytical expression for the joint cumulative distribution function (CDF) in the considered multiuser FA system under correlated Fisher-Snedecor $\mathcal{F}$ fading channels, and then derive the OP in a closed-form expression. It is worth noting that copula theory is a flexible statistical approach that can accurately generate the multivariate distributions of two or more arbitrarily correlated random variables and describe the corresponding dependence structure beyond linear correlation. This has recently become quite popular in the performance analysis of wireless communication systems \cite{ghadi2020copula,besser2020copula,ghadi2022performance}. Eventually, our numerical results illustrate how considering the FA in multiuser communication systems can significantly enhance the system performance in terms of the OP by supporting a large number of users. 

\section{System Model}\label{sec-sys}
We consider a $K$-user wireless multiple access communication system with known interference $S_k$, $k\in\{1,\dots,K\}$, where $K$ single fixed-antenna transmitters $t_k$ aim to send independent messages $X_k$ to a common receiver $r$ equipped with a single FA, respectively. We assume that the interference signals $S_k$ with zero mean and variances $Q_k$ (i.e., $S_k\sim\mathcal{N}\left(0,Q_k\right)$) are known non-causally to the transmitters $t_k$, and the information signals $X_k$ sent by transmitters $t_k$ are subjected to the average power constraint $\mathbb{E}\left[|X_k|^2\right]\leq P$, respectively. Moreover, we suppose that the FA includes only one radio frequency (RF) chain and $N$ preset positions (i.e., ports), which are evenly distributed on a linear space of length $W\lambda$ where $\lambda$ denotes the wavelength of the radiation. Thus, relative to the first port, the distance between the first port and the $n$-th port is given by  
\begin{align}
d_n=\left(\frac{n-1}{N-1}\right)W\lambda, \quad n=1,2,\dots,N.
\end{align}

Furthermore, the received signal at the $n$-th port of the common receiver $r$ can be written as
\begin{align}
Y_n=\sum_{k=1}^{K}h_{n,k}X_k+S_k+Z_n
\end{align}
in which $h_{n,k}$ denotes the fading channel coefficient of the $k$-th port from the $n$-th transmitter and $Z_n$ is the independent identically distributed (i.i.d.) additive white Gaussian noise (AWGN) with zero mean and variance $\sigma^2$ at each port. Under this model, we assume for the sake of generality that the square of the channel amplitude $|h_{n,k}|^2$ follows Fisher-Snedecor $\mathcal{F}$ distribution with the following CDF \cite{ghadi2022performance}
\begin{align}\label{eq-cdf1}
F_{|h_{n,k}|^2}(v)=I_{\frac{m_1v}{m_1v+m_2}}\left(\frac{m_1}{2},\frac{m_2}{2}\right),
\end{align}
where $I_x(a,b)$ is the regularized incomplete beta function and ($m_1,m_2$) are the corresponding degrees of freedom. 
In addition, we assume that the FA can always select the best port with the strongest signal for communication, i.e., 
\begin{align}
h_{\mathrm{FAS},k}=\max\left\{|h_{1,k}|,|h_{2,k}|,\dots,|h_{N,k}|\right\},
\end{align}
where the channel coefficients $h_{n,k}$ for $n\in\{1,\dots,N\}$ at the selected transmitter $t_k$ are spatially correlated since they can be arbitrarily close to each other. By assuming 2-D isotropic scattering and isotropic receiver ports on the FA, such spatial correlation can be characterized by Jakes' model as \cite{stuber2001principles}
\begin{align}\label{eq-Jake}
	\eta_{n}^{(k)}=\delta^2J_0\left(\frac{2\pi\left(n-1\right)}{N-1}W\right),
\end{align}
%\begin{align}\label{eq-Jake}
%\eta_{n,l}^{(k)}=\delta^2J_0\left(\frac{2\pi\left(n-l\right)}{N-1}W\right),
%\end{align}
where $\eta_{n}^{(k)}$ denotes the linear correlation coefficient that can describe the correlation between the corresponding %$n$-th and $l$-th 
 channel coefficients for each transmitter $k$, $\delta^2$ accounts for the large-scale fading effect, and $J_0(\cdot)$ denotes the zero-order Bessel function of the first kind. Therefore, under such assumptions, the received signal-to-noise ratio (SNR) at the common FA receiver from the transmitter $t_k$ can be defined as
\begin{align}
\gamma_{k}=\frac{Ph_{\mathrm{FAS},k}^2}{\sigma^2}=\bar{\gamma}h_{\mathrm{FAS},k}^2,
\end{align}
where $\bar{\gamma}=\frac{P}{\sigma^2}$ represents the average SNR.

Further, in a $K$-user block fading  multiple access communications with side non-causally known SI at the transmitters $t_k$ and the coherent FA receiver $r$ (i.e., known as DMAC), the instantaneous capacity region is given by \cite{philosof2011lattice}
\begin{align}\label{eq-region}
\sum_{k=1}^{K}R_k\leq\frac{1}{2}\log_2\left(1+\underset{k=1,\dots,K}{\min}\frac{Ph_{\mathrm{FAS},k}^2}{\sigma^2}\right).
\end{align}
%\begin{figure}[!t]
%	\centering
%	\includegraphics[width=0.6\columnwidth]{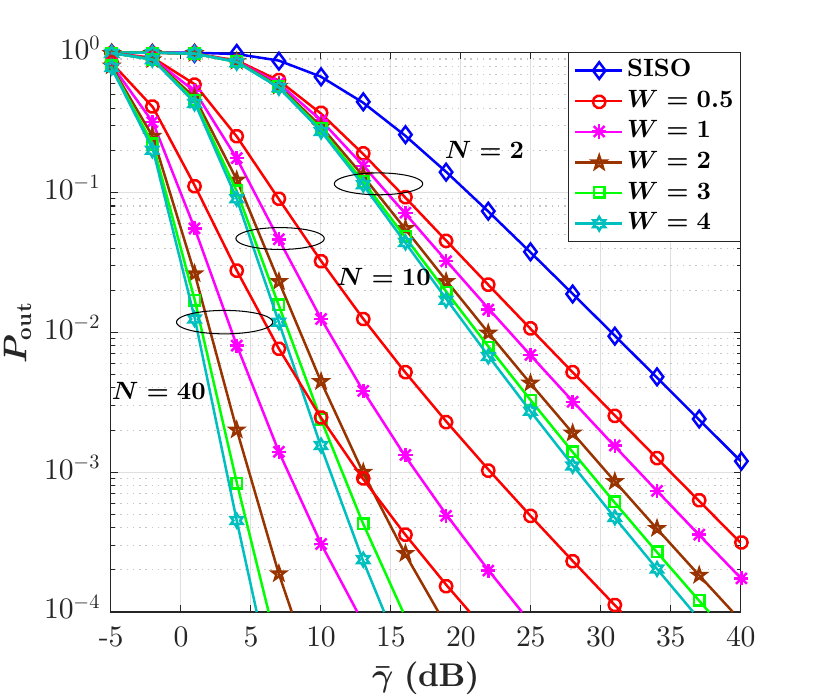}
%	\caption{}\vspace{0cm}
%	\label{}
%\end{figure}

\section{Performance Analysis}
Here, we first describe the spatial correlation between the FA ports by combining Jakes' model to copula theory. Then, we derive a closed-form expression of the OP for the considered system under Fisher-Snedecor $\mathcal{F}$ fading channels.

\subsection{Statistical Characterization}
\begin{definition}[$d$-dimensional copula]
Let $\vec{S}=(S_1,\dots,S_d)$ be a vector of $d$ random variables (RVs) with marginal and joint CDFs $F_{S_i}(s_i)$ and $F_{S_1,\dots,S_d}(s_1,\dots,s_d)$ for $i\in\{1,\dots,d\}$, respectively. Then the copula function $C(u_1,\dots,u_d)$ of the random vector $\vec{S}$ defined on the unit hypercube $[0,1]^d$ with uniformly distributed RVs $U_i:=F_{S_i}(s_i)$ over $[0,1]$ is given by \cite{nelsen2006introduction}
\begin{equation}
C(u_1,\dots,u_d)=\Pr(U_1\leq u_1,\dots,U_d\leq u_d),
\end{equation}
where $u_i=F_{S_i}(s_i)$.
\end{definition}

Under such an assumption, Sklar's theorem postulates that there exists one Copula function $C$ such that for all $s_i$ in the extended real line domain $\bar{R}$ \cite{nelsen2006introduction}
\begin{equation}\label{eq-sklar}
F_{S_1,\dots,S_d}(s_1,\dots,s_d)=C\left(F_{S_1}(s_1),\dots,F_{S_d}(s_d)\right).
\end{equation}
It should be noted that the CDF provided by \eqref{eq-sklar} is valid for any choice of arbitrary correlated RVs from different families of marginal distributions, where by applying an appropriate copula function $C$ to it, the corresponding multivariate joint distribution is derived. However, selecting the optimum copula function that perfectly matches to the considered problem is a very challenging task.

In general, there are three main types of copulas that can be used to describe the linear/non-linear correlation between RVs, namely: empirical, elliptical, and Archimedean copulas. The empirical class of copula requires a set of existing real data for analyzing the structure of dependency \cite{ruschendorf2009distributional}. Elliptical copulas are defined as the dependence structure of some related elliptical distribution and can be obtained from the respective multivariate distribution function by standardizing the univariate marginal laws, which have complicated form \cite{frahm2003elliptical}. On the other hand, Archimedean copulas are normally used to correlate a potentially large number of similar RVs by a generator function \cite{nelsen1997dependence}, which have closed-form expressions. Archimedean copulas have also simple structures with nice properties and they are easy to construct so that the great variety of families of copulas belong to this class \cite{nelsen2006introduction}. In this regard, the empirical data analysis shows that the Clayton copula is one of the most popular Archimedean copulas which can perfectly describe any specific non-zero level of low tail dependency between individual RVs. Hence, since the outage of transmission often occurs in deep fading conditions (i.e., tail dependence) over wireless communication systems \cite{livieratos2014correlated}, we here exploit the Clayton copula to describe the structure of dependency between correlated fading channel coefficients and generate the corresponding joint CDF. Our analytical and numerical results will reveal that this choice is justified since it can offer good mathematical tractability and accurately describe the impact of spatial correlation between the FA ports in the considered multiuser communication setup. 

\begin{definition}[Clayton copula]
The $d$-dimension Clayton copula with a generator function $\phi(t)=\frac{t^{-\beta}-1}{\beta}$ is defined as
\begin{equation}\label{eq-cl}
C_{\mathrm{CL}}(u_1,u_2,\dots,u_d)=\left[\sum_{j=1}^d \left(u_j^{-\beta}-1\right)+1\right]^{-\frac{1}{\beta}},
\end{equation}
where $\beta\in[0,\infty)$ is a dependence structure parameter of Clayton copula. The independent case is achieved if $\beta=0$.
\end{definition}

Here, by utilizing the Clayton copula definition, we derive the CDF of $h_{\mathrm{FAS},k}^2$ in the following lemma.

\begin{lemma}
The CDF of $h_{\mathrm{FAS},k}^2$ under Fisher-Snedecor $\mathcal{F}$ fading channels is given by
\begin{equation}\label{eq-cdf2}
	F_{h_{\mathrm{FAS},k}^2}(r)=\left[\sum_{n=1}^N \left(I^{-\beta}_{\frac{m_1r}{m_1r+m_2}}\left(\frac{m_1}{2},\frac{m_2}{2}\right)-1\right)+1\right]^{-\frac{1}{\beta}}.
\end{equation}
%\begin{equation}\label{eq-cdf2}
%F_{h_{\mathrm{FAS},k}^2}(r)=\left[N \left(I_{\frac{m_1r}{m_1r+m_2}}^{-\beta}\left(\frac{m_1}{2},\frac{m_2}{2}\right)-1\right)+1\right]^{-\frac{1}{\beta}}.
%\end{equation}
\end{lemma}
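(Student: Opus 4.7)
The plan is to recognize that since $h_{\mathrm{FAS},k}$ is the maximum over the port-wise channel magnitudes, squaring preserves order (magnitudes are non-negative), so $h_{\mathrm{FAS},k}^2 = \max_{n}|h_{n,k}|^2$. Hence the event $\{h_{\mathrm{FAS},k}^2 \leq r\}$ is exactly the intersection $\bigcap_{n=1}^N \{|h_{n,k}|^2 \leq r\}$, and the target CDF equals the joint CDF of $(|h_{1,k}|^2,\dots,|h_{N,k}|^2)$ evaluated on the diagonal $(r,r,\dots,r)$.

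Next, I would invoke Sklar's theorem as stated in \eqref{eq-sklar} to rewrite this joint CDF as a copula acting on the marginal CDFs. Because the $N$ port-channel amplitudes for a given transmitter $k$ are spatially correlated through Jakes' model \eqref{eq-Jake} and, in particular, we expect strong tail dependence in deep-fade regimes, I would then substitute the Clayton copula $C_{\mathrm{CL}}$ from \eqref{eq-cl} as the joint coupling function, invoking the justification given in the preceding paragraph (Archimedean tractability plus lower-tail dependence matching the outage regime). Finally, plugging the common marginal CDF from \eqref{eq-cdf1}, namely $F_{|h_{n,k}|^2}(r) = I_{m_1 r/(m_1 r + m_2)}(m_1/2, m_2/2)$, into each slot of the Clayton copula and exploiting the fact that the same value $r$ is supplied to every argument yields the stated closed form.

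The main obstacle is not the manipulation itself, which is essentially a substitution, but rather the modeling justification: the Clayton copula carries a single scalar dependence parameter $\beta$, whereas Jakes' model prescribes a full matrix of pairwise correlations that vary with port separation. The proof implicitly relies on the assertion (presumably tied to the Spearman's $\rho$ mapping discussed in the introduction) that this single $\beta$ can be calibrated to reproduce the spatial dependence induced by \eqref{eq-Jake} with acceptable accuracy. I would therefore make explicit, as part of the proof, that $\beta$ is selected so that the copula-implied rank correlation matches the Spearman's $\rho$ derived from Jakes' linear correlation $\eta_n^{(k)}$, and defer the numerical verification of this matching to the results section. The remaining algebra is a one-line substitution and contributes no genuine difficulty.
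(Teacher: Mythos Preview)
Your proposal is correct and follows essentially the same route as the paper's proof: rewrite the CDF of the maximum as the joint CDF on the diagonal, apply Sklar's theorem \eqref{eq-sklar}, specialize to the Clayton copula \eqref{eq-cl}, and substitute the common Fisher--Snedecor $\mathcal{F}$ marginal \eqref{eq-cdf1}. The calibration of $\beta$ via Spearman's $\rho$ that you flag is handled by the paper \emph{after} the lemma (leading to \eqref{eq-beta} and Corollary~1), so within the lemma itself $\beta$ is simply treated as a free Clayton parameter and no additional justification is required there.
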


\begin{proof}
By exploiting the definition of the CDF, $F_{h_{{\mathrm{FAS}},k}^2}(r)$ can be mathematically defined as
\begin{align}
F_{h_{{\mathrm{FAS}},k}^2}(r)&\hspace{.5mm}=\Pr\left(h_{{\mathrm{FAS}},k}^2\leq r\right)\notag\\
&\hspace{.5mm}=\Pr\left(\max\left\{|h_{1,k}|^2,\dots,|h_{N,k}|^2\right\}\leq r\right)\notag\\
&\hspace{.5mm}=F_{|h_{1,k}|^2,\dots,|h_{N,k}|^2}\left(r,\dots,r\right)\notag\\
&\overset{(a)}{=}C\left(F_{|h_{1,k}|^2}(r),\dots,F_{|h_{N,k}|^2}(r)\right),\label{eq-p1}
\end{align}
where ($a$) is obtained from \eqref{eq-sklar}. Now, by inserting \eqref{eq-cdf1} into  \eqref{eq-p1} and considering \eqref{eq-cl}, the proof is completed. 
\end{proof}

Note from \eqref{eq-cdf2} that the Clayton dependence parameter $\beta$ can generally capture the positive correlation between the FA ports as well as the independent case. However, it does not necessarily describe the linear correlation because when non-linear transformations are applied to those RVs, the linear correlation cannot be maintained anymore, i.e., copula definition. Additionally, it can be seen that the CDF of $h_{\mathrm{FAS},k}^2$ is only expressed in terms of the number of FA ports $N$ but it is widely accepted that the FA size $W$ has more effects on the spatial correlation between the fading coefficients \cite{wong2022closed}. In this regard, we need to measure such dependency by connecting the Clayton copula parameter $\beta$ to the Jakes' model. To this end, we exploit popular rank correlation coefficients such as Spearman's $\rho$, denoted by $\rho_\mathrm{s}$, that can measure the statistical dependence between the ranking of two RVs as follows
\begin{align}\label{eq-def-rho-s}
\rho_\mathrm{s}=12\iint_{\left[0,1\right]^2}u_1u_2dC\left(u_1,u_2\right)-3.
\end{align}
Now, by substituting the Clayton copula from \eqref{eq-cl} into \eqref{eq-def-rho-s} and computing the integral, $\rho_\mathrm{s}$ under Clayton copula can be approximated as
\begin{align}
\rho_\mathrm{s}\approx\frac{3\beta}{2\left(\beta+2\right)}.
\end{align}
Further, regarding the fact that the Spearman's $\rho$, i.e., $\rho_\mathrm{s}$, for a pair of continuous RVs is identical to Pearson's product moment correlation coefficient for the grades of the corresponding RVs (i.e., $\rho_\mathrm{s}=\eta_{n}^{(k)}$) \cite{nelsen2006introduction}, we can express the Clayton copula parameter $\beta$ in terms of Jakes' model as
\begin{align}\label{eq-beta}
\beta=\frac{4\eta_{n}^{(k)}}{3-2\eta_{n}^{(k)}}.
\end{align}
Thus, the CDF of $h_{\mathrm{FAS},k}^2$ can be determined in terms of the size $W$ and the number of FA ports $N$ using the following corollary. 

\begin{corollary}
By plugging $\beta$ from \eqref{eq-beta} into \eqref{eq-cdf2}, the CDF of $h_{\mathrm{FAS},k}^2$ under Fisher-Snedecor $\mathcal{F}$ fading channels with the correlation coefficient $\eta_{n}^{(k)}$ is given by 
\begin{equation}\label{eq-cdf3}
	F_{h_{\mathrm{FAS},k}^2}(r)=\left[\sum_{n=1}^N \left(I_{\frac{m_1r}{m_1r+m_2}}^{\frac{4\eta_{n}^{(k)}}{2\eta_{n}^{(k)}-3}}\left(\frac{m_1}{2},\frac{m_2}{2}\right)-1\right)+1\right]^{\frac{2\eta_{n}^{(k)}-3}{4\eta_{n}^{(k)}}},
\end{equation}
%\begin{equation}\label{eq-cdf3}
%F_{h_{\mathrm{FAS},k}^2}(r)=\left[N \left(I_{\frac{m_1r}{m_1r+m_2}}^{\frac{4\eta_{n,l}^{(k)}}{2\eta_{n,l}^{(k)}-3}}\left(\frac{m_1}{2},\frac{m_2}{2}\right)-1\right)+1\right]^{\frac{2\eta_{n,l}^{(k)}-3}{4\eta_{n,l}^{(k)}}},
%\end{equation}
where $\eta_{n}^{(k)}$ is obtained from \eqref{eq-Jake}.
\end{corollary}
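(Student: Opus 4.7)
The plan is to carry out a direct algebraic substitution of the dependence-parameter expression in \eqref{eq-beta} into the CDF formula \eqref{eq-cdf2} established in Lemma~1. First, I would rewrite the two occurrences of $\beta$ that appear as exponents in \eqref{eq-cdf2}, namely $-\beta$ inside the regularized incomplete beta function and $-1/\beta$ as the outer exponent. Using $\beta=4\eta_{n}^{(k)}/(3-2\eta_{n}^{(k)})$, I would multiply numerator and denominator by $-1$ in each exponent to convert $-\beta$ into $4\eta_{n}^{(k)}/(2\eta_{n}^{(k)}-3)$ and $-1/\beta$ into $(2\eta_{n}^{(k)}-3)/(4\eta_{n}^{(k)})$, exactly matching the form displayed in \eqref{eq-cdf3}. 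Inserting these back into the structure of \eqref{eq-cdf2} then yields the claimed expression.

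The logical chain justifying this substitution has already been assembled earlier in the manuscript: \eqref{eq-cdf2} gives the Clayton-copula based CDF of $h_{\mathrm{FAS},k}^2$ in terms of the abstract dependence parameter $\beta$; the integral identity for Spearman's $\rho$ applied to the Clayton copula delivers the approximation $\rho_\mathrm{s}\approx 3\beta/(2(\beta+2))$; and the identification $\rho_\mathrm{s}=\eta_{n}^{(k)}$ (valid because Spearman's $\rho$ equals the Pearson correlation between the grades, i.e., the copula-uniform transforms of the marginals) is then solved for $\beta$ to produce \eqref{eq-beta}. The corollary's role is simply to combine these two pieces so that the resulting CDF depends on the physical quantities $W$ and $N$ entering through Jakes' model \eqref{eq-Jake}, rather than on the copula parameter $\beta$ directly.

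The main obstacle in writing the proof is not mathematical but cosmetic: one must carefully track the sign flips so that the exponents appear in the canonical form of \eqref{eq-cdf3} (with $2\eta_{n}^{(k)}-3$ rather than $-(3-2\eta_{n}^{(k)})$), and one must verify that the same factor propagates consistently to the outer exponent. Once this bookkeeping is done, the identification with \eqref{eq-cdf3} is immediate and no further analytic work is required, so the proof reduces to a one-line substitution statement referring back to Lemma~1, equation~\eqref{eq-beta}, and Jakes' relation~\eqref{eq-Jake}.
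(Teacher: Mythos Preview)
Your proposal is correct and matches the paper's approach: the corollary is stated without a separate proof precisely because it is an immediate substitution of \eqref{eq-beta} into \eqref{eq-cdf2}, and your sign-tracking of $-\beta$ and $-1/\beta$ is exactly the bookkeeping required. Nothing further is needed.
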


\subsection{Outage Probability}
\begin{Proposition}
The OP for the considered dirty FA multiuser communication system under Fisher-Snedecor $\mathcal{F}$ fading channels is given by
\begin{align}
	&P_\mathrm{out}=1-\notag\\
	&\left[1-\left[\sum_{n=1}^N \left(I_{\frac{m_1\gamma_\mathrm{th}}{m_1\gamma_\mathrm{th}+m_2}}^{\frac{4\eta_{n}^{(k)}}{2\eta_{n}^{(k)}-3}}\left(\frac{m_1}{2},\frac{m_2}{2}\right)-1\right)+1\right]^{\frac{2\eta_{n}^{(k)}-3}{4\eta_{n}^{(k)}}}\right]^K.\label{eq-out}
\end{align}
%\begin{align}
%&P_\mathrm{out}=1-\notag\\
%&\left[1-\left[\sum_{n=1}^N \left(I_{\frac{m_1\gamma_\mathrm{th}}{m_1\gamma_\mathrm{th}+m_2}}^{\frac{4\eta_{n,l}^{(k)}}{2\eta_{n,l}^{(k)}-3}}\left(\frac{m_1}{2},\frac{m_2}{2}\right)-1\right)+1\right]^{\frac{2\eta_{n,l}^{(k)}-3}{4\eta_{n,l}^{(k)}}}\right]^K.\label{eq-out}
%\end{align}
\begin{proof}
By inserting the instantaneous capacity region of the considered  dirty FA system from \eqref{eq-region} into the OP definition, we have
\begin{align}
&P_\mathrm{out}=\Pr\left(\frac{1}{2}\log_2\left(1+\underset{k=1,\dots,K}{\min}\frac{Ph_{\mathrm{FAS},k}^2}{\sigma^2}\right)\leq R_\mathrm{th}\right)\\
&=\Pr\left(\underset{k=1,\dots,K}{\min}h_{\mathrm{FAS},k}^2\leq \gamma_\mathrm{th}\right)\overset{(b)}{=}1-\left[1-F_{h_{\mathrm{FAS},k}^2}\left(\gamma_\mathrm{th}\right)\right]^K,\label{eq-p2}
\end{align}
where $\gamma_\mathrm{th}=\frac{2^{2R_\mathrm{th}}-1}{\bar{\gamma}}$ and $(b)$ is derived by assuming the independence of the fading channels $h_{n,k}$ for $k\in\{1,\dots,K\}$. Now, by inserting \eqref{eq-cdf3} into \eqref{eq-p2}, the OP is achieved as \eqref{eq-out} and the proof is completed.  
\end{proof}
\end{Proposition}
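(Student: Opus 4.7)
The plan is to combine the DMAC rate expression in \eqref{eq-region} with the marginal CDF of $h_{\mathrm{FAS},k}^2$ derived in Corollary~1, and to exploit the independence of the fading across users in order to factorize the outage event.

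First I would start from the definition $P_\mathrm{out}=\Pr(R\leq R_\mathrm{th})$, where $R$ is the instantaneous sum-rate appearing on the left-hand side of \eqref{eq-region}. Since $\log_2$ is strictly increasing and the average SNR $\bar{\gamma}=P/\sigma^2$ is deterministic, the outage event $\tfrac{1}{2}\log_2(1+\bar{\gamma}\min_{k}h_{\mathrm{FAS},k}^2)\le R_\mathrm{th}$ rearranges to $\min_{k=1,\dots,K}h_{\mathrm{FAS},k}^2\le \gamma_\mathrm{th}$, with threshold $\gamma_\mathrm{th}=(2^{2R_\mathrm{th}}-1)/\bar{\gamma}$. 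Next I would pass to the survival function of the minimum: since the $K$ transmitter-to-FA links correspond to physically distinct users and are modelled as mutually independent (while statistically identical), $\Pr\!\bigl(\min_k h_{\mathrm{FAS},k}^2>\gamma_\mathrm{th}\bigr)=\prod_{k=1}^{K}\Pr\!\bigl(h_{\mathrm{FAS},k}^2>\gamma_\mathrm{th}\bigr)=\bigl[1-F_{h_{\mathrm{FAS},k}^2}(\gamma_\mathrm{th})\bigr]^{K}$. Substituting the closed-form CDF from \eqref{eq-cdf3} evaluated at $r=\gamma_\mathrm{th}$ then reproduces exactly the claimed expression \eqref{eq-out}.

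The only real subtlety is to keep the two kinds of statistical dependence separate: within a given user $k$, the $N$ port observations are spatially correlated through Jakes' model, and that correlation has already been absorbed into the per-user CDF $F_{h_{\mathrm{FAS},k}^2}$ via the Clayton copula, so nothing further needs to be done for it; across users, by contrast, independence is what justifies the $K$-th-power factorization. Once this distinction is made explicit, the rest of the argument is a direct substitution — no integration, no special-function manipulation, and no asymptotic analysis is required — so I do not anticipate a genuine technical obstacle beyond bookkeeping the exponents $\tfrac{4\eta_{n}^{(k)}}{2\eta_{n}^{(k)}-3}$ and $\tfrac{2\eta_{n}^{(k)}-3}{4\eta_{n}^{(k)}}$ that come from the $\beta$-to-$\eta_{n}^{(k)}$ conversion in \eqref{eq-beta}.
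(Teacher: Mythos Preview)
Your proposal is correct and follows exactly the same route as the paper: rewrite the outage event via the monotonicity of the rate expression as $\min_k h_{\mathrm{FAS},k}^2\le\gamma_\mathrm{th}$, use independence across the $K$ users to obtain $1-[1-F_{h_{\mathrm{FAS},k}^2}(\gamma_\mathrm{th})]^K$, and then substitute the Clayton-copula CDF from \eqref{eq-cdf3}. Your explicit separation of the within-user spatial correlation (already absorbed into $F_{h_{\mathrm{FAS},k}^2}$) from the across-user independence is a helpful clarification but does not depart from the paper's argument.
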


%\begin{align}
%	h_{\mathrm{FAS},k}=\max\left\{|h_{1,k}|,|h_{2,k}|,\dots,|h_{N,k}|\right\}
%	\end{align}
%
%where $\gamma_\mathrm{th}=\frac{2^{2R_\mathrm{th}}-1}{\bar{\gamma}}$.
%\begin{align}
%F_{h_{\mathrm{FAS},k}^2}\left(r\right)=C\left(F_{|h_{1,k}|^2}(r),F_{|h_{2,k}|^2}(r),\dots,F_{|h_{N,k}|^2}(r)\right)
%\end{align}
%$|h_{N,k}|^2$ follows Gamma distribution as
%\begin{align}
%F_{|h_{N,k}|^2}(r)=\frac{\gamma\left(\alpha,\beta r\right)}{\Gamma\left(\alpha\right)}
%\end{align}
%Clayton copula
%\begin{align}
%C_\mathrm{CL}\left(u_1,\dots,u_d\right)=\left[\sum_{j=1}^d\left(u_j^{-\zeta}-1\right)+1\right]^{-\frac{1}{\zeta}}
%\end{align}
%\begin{align}
%F_{h_{\mathrm{FAS},k}^2}\left(r\right)=\left[N\left(\left(\frac{\gamma\left(\alpha,\beta r\right)}{\Gamma\left(\alpha\right)}\right)^{-\zeta}-1\right)+1\right]^{-\frac{1}{\zeta}}
%\end{align}
%thus
%\begin{align}
%P_\mathrm{out}=1-\left[1-\left[N\left(\left(\frac{\gamma\left(\alpha,\beta \gamma_\mathrm{th}\right)}{\Gamma\left(\alpha\right)}\right)^{-\zeta}-1\right)+1\right]^{-\frac{1}{\zeta}}\right]^K
%\end{align}
%\begin{figure*}[!t]
%	\centering
%	%\hspace{-0.2cm}
\begin{figure*}
\subfigure[$K=4$]{%
\includegraphics[width=0.35\textwidth]{outg4.eps}\label{outg4}%
}\hspace{-0.3cm}%or more
\subfigure[$K=16$]{%
\includegraphics[width=0.35\textwidth]{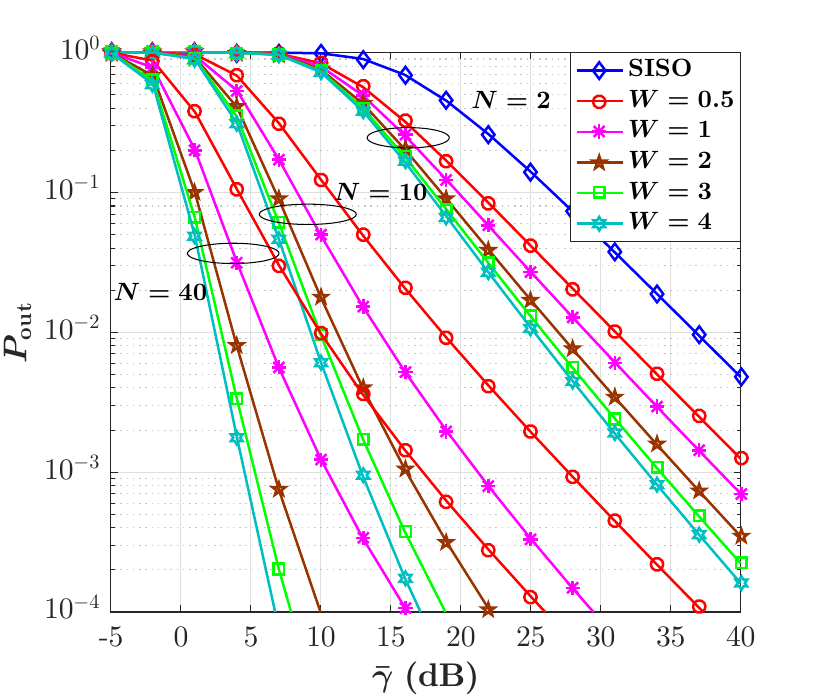}\label{outg16}%
}\hspace{-0.3cm}%or more
\subfigure[$K=32$]{%
\includegraphics[width=0.35\textwidth]{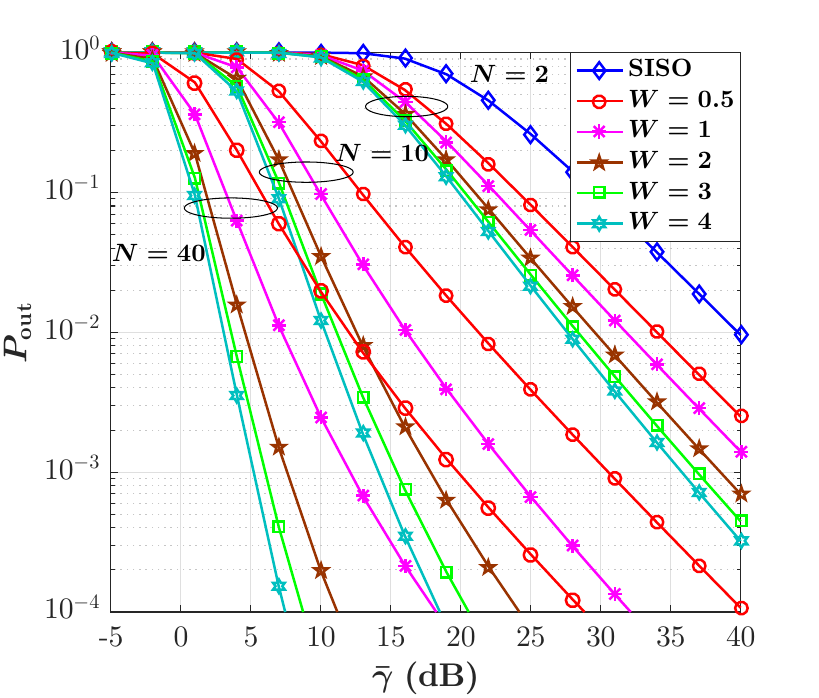}\label{outg32}%
}%\hspace{-0.3cm}%or more
\caption{OP versus average SNR for selected values of the FA size $W$, number for ports $N$, and number of users $K$.}\label{fig-outg}\vspace{-0.3cm}
\end{figure*}

\begin{figure*}
\subfigure[$K=4$]{%
\includegraphics[width=0.35\textwidth]{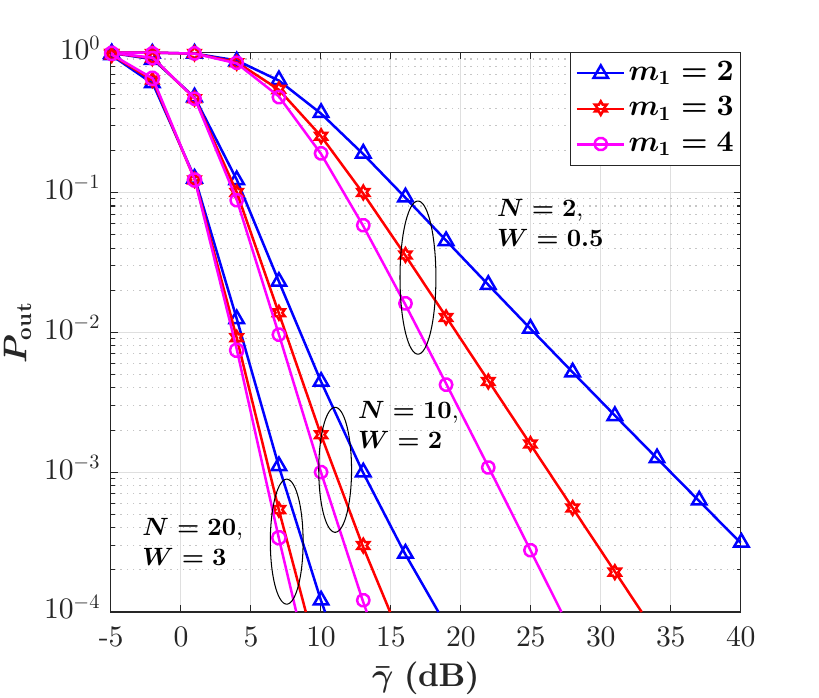}\label{out_m4}%
}\hspace{-0.3cm}%or more
\subfigure[$K=16$]{%
\includegraphics[width=0.35\textwidth]{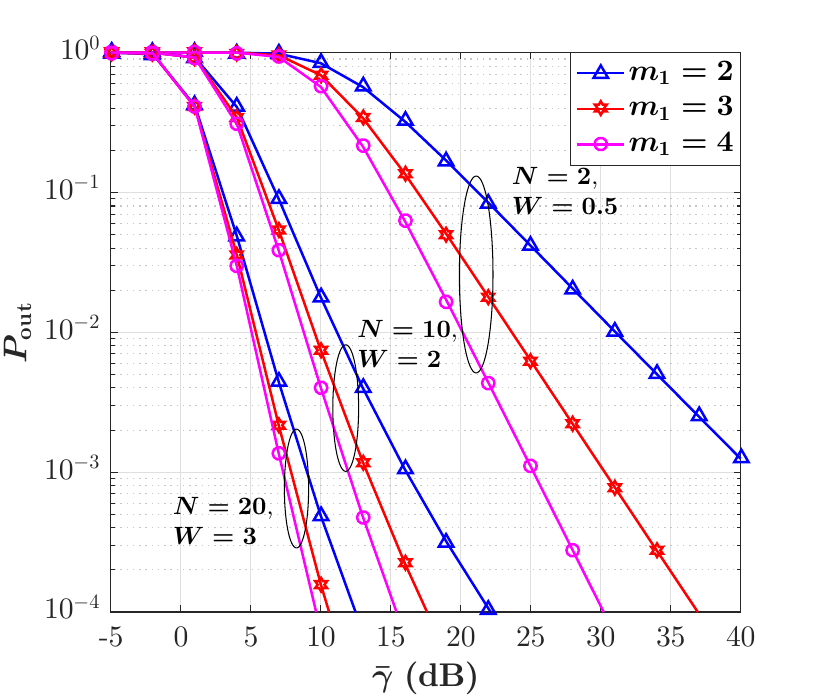}\label{out_m16}%
}\hspace{-0.3cm}%or more
\subfigure[$K=32$]{%
\includegraphics[width=0.35\textwidth]{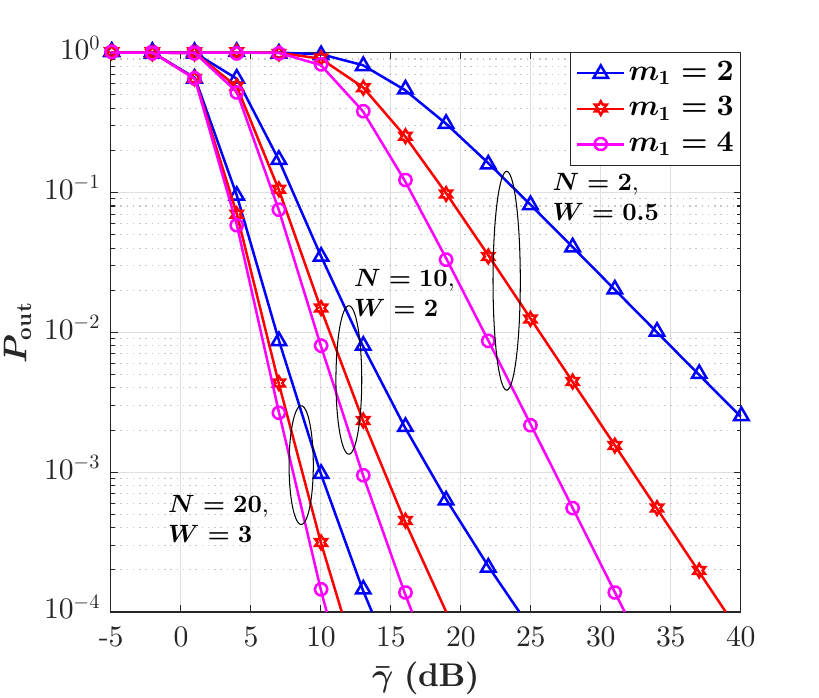}\label{out_m32}%
}%\hspace{-0.3cm}%or more
\caption{OP versus average SNR for selected values of the degree of freedom $m_1$, FA size $W$, number for ports $N$, and number of users $K$.}\vspace{-0.2cm}\label{fig-outm}
\end{figure*}
\vspace{-0.3cm}
%\begin{figure}[!t]
%	\centering
%	\includegraphics[width=0.9\columnwidth]{out_g.eps}
%	\caption{Outage probability versus average SNR when $K=4$, $m_1=2$, and $m_2=4$.}\vspace{0cm}
%	\label{}
%\end{figure}
%\begin{figure}[!t]
%	\centering
%	\includegraphics[width=0.9\columnwidth]{out_g16.eps}
%	\caption{Outage probability versus average SNR when $K=16$, $m_1=2$, and $m_2=4$.}\vspace{0cm}
%	\label{}
%\end{figure}
%\begin{figure}[!t]
%	\centering
%	\includegraphics[width=0.9\columnwidth]{out_g32.eps}
%	\caption{Outage probability versus average SNR when $K=32$, $m_1=2$, and $m_2=4$.}\vspace{0cm}
%	\label{}
%\end{figure}

\section{Numerical Results}\label{sec-num}
Here, we provide numerical results to evaluate the proposed analytical expression. Particularly, we analyze the behavior of the OP over the considered multiuser FA system in different scenarios when the number of users is $K=4, 16, 32$. We also set the simulation parameters as $P=30$dBm, $\sigma^2=-80$dBm, $R_\mathrm{th}=1$bps/Hz, $\delta^2=1$, and $(m_1,m_2)=(2,4)$. Additionally, to describe the dependence structure between the FA ports and generate the corresponding correlated Fisher-Snedecor $\mathcal{F}$ RVs (i.e., $h_{n,k}$), we simulate the Clayton copula by exploiting the Marshall and Olkin’s approach \cite{melchiori2006tools}, as shown in Algorithm \ref{al-1}, which is based on the Laplace transform.

Fig.~\ref{fig-outg} illustrates the performance of the OP in terms of average SNR $\bar{\gamma}$ for given values of the FA size $W$, the number of FA ports $N$, and the number of users $K$ under correlated Fisher-Snedecor $\mathcal{F}$ fading. For $K=4,16,32$, it can be seen that by increasing $\bar{\gamma}$, the OP decreases since the channel conditions between the transmitters and the common receiver improve. We can also see that considering the FA in multiuser communication systems can significantly lower the OP compared with the single-input single-output (SISO) system regardless the number of users. Particularly, we observe that for a given number of FA ports (e.g., $N=2,10,40$), the OP reduces as $W$ increases. The reason is that by increasing the spatial separation between the FA ports (i.e., increasing the FA size) for a fixed $N$, the spatial correlation between the corresponding channels becomes weaker; hence, the system performance in terms of the OP is ameliorated. In addition, we observe that for a fixed FA size $W$, the OP performance improves as the number of FA ports raises. However, such an improvement is much more noticeable for the large FA size (e.g., $W=4$) compared to the small values (e.g., $W=0.5$). It can also be seen that by simultaneously increasing $W$ and $N$, the best performance for the OP is achieved. Moreover, by comparing the results in Figs.~\ref{outg4}, \ref{outg16}, and \ref{outg32} from the number of users perspective, we can observe that as $K$ grows, the OP performance deteriorates since the network interference increases. Moreover, it can be found that one FA at the common receiver can significantly enhance the system performance in the presence of a large number of users. 

Given the importance of accurate fading channel modeling on the performance in real propagation environments, we study the impact of fading severity on the considered multiuser FA system in Fig.~\ref{fig-outm} by changing the degree of freedom $m_1$ which mainly refers to as the fading parameter in Fisher-Snedecor $\mathcal{F}$ distribution. For all selected values of $K$, $N$, and $W$, it can be seen that as the fading severity decreases (i.e., $m_1$ increases), the OP performance is improved, meaning that a lower value of the OP is achieved under a mild fading condition (e.g., $m_1=4$) than when a stronger one (e.g., $m_1=2$) is considered. In addition, we can observe that the impact of fading parameter $m_1$ is more tangible on the OP performance when the FA size and the number of FA ports are large. Moreover, it can be found from the results in Figs.~\ref{out_m4}--\ref{out_m32} that as the number of users $K$ grows, the effect of fading parameter $m_1$ becomes minor for a fixed average SNR $\bar{\gamma}$.

\begin{algorithm}
\caption{Clayton copula sampling}\label{al-1}
\textbf{Step 1.} \textit{Generate $J\sim\mathcal{L}^{-1}\left[\phi(t)\right]$, such that $J\sim\Gamma\left(\frac{1}{\beta},\beta\right)$ }\\
\textbf{Step 2.} \textit{Generate $\vec{S}=\left(S_1,\dots,S_d\right)$,  such that $S_i\sim\mathcal{U}\left(0,1\right)$ for $i=1,\dots,d$}\\
\textbf{Step 3.} \textit{Return $U_i=\phi\left(-\frac{\log\left(S_i\right)}{J}\right)$, such that $\phi(t)=\frac{t^{-\beta}-1}{\beta}$}
\end{algorithm}

\vspace{-0.3cm}
\section{Conclusion}\label{sec-con}
In this letter, we studied the performance of the FA-aided DMAC with non-causally known SI at the transmitters, where all fading channels follow Fisher-Snedecor $\mathcal{F}$ distribution. By integrating the copula theory with Jakes' model to describe the inherent spatial correlation among the FA ports, we derived an analytical expression for the joint CDF of the FA multiuser system. Then, we evaluated the considered system performance by obtaining the closed-form expression of the OP. Eventually, our results indicated that considering FA at the receiver can remarkably enhance the system performance in terms of the OP over multiuser communication systems.

\bibliographystyle{IEEEtran}
\bibliography{sample.bib}
\end{document}